\def\eps{\lambda}
\def\REG{\textit{\bf REG}}
\newtheorem{theorem}{Theorem}
\newtheorem{lemma}[theorem]{Lemma}
\date{}
\begin{document}
\title{Answers to Questions Formulated in the Paper ``On States Observability in Deterministic Finite Automata''}
\author{Tom\'{a}\v{s} Masopust\\
  \small Faculty of Information Technology, Brno University of Technology\\[-0.8ex]
  \small Bo\v{z}et\v{e}chova 2, Brno 61266, Czech Republic\\
  \small \texttt{masopust@fit.vutbr.cz}
}

\maketitle
  \begin{abstract}
    This paper gives answers to questions formulated as open in the paper ``On State Observability in Deterministic Finite Automata'' by A. Mateescu and Gh. P{\u a}un. Specifically, it demonstrates that for all $k\ge 2$, the families of regular languages acceptable by deterministic finite automata with no more than $k$ semi-observable states, denoted by $\mathcal{T}_k$, are anti-AFL's, and that the family $\mathcal{T}_1$ differs in the closure property under Kleene $+$.
  \end{abstract}

\section{Introduction}
  In 1987, Mateescu and P{\u a}un \cite{mateescu} studied state observability in completely specified deterministic finite automata without non-accessible states. They classified the states of finite automata into three types: observable, semi-observable, and non-observable. A state $q$ is said to be {\em observable} if there exists a string $w$ such that $\delta(q,w)$ is a final state; otherwise, $q$ is said to be {\em non-observable}. Moreover, an observable state $q$ is said to be {\em semi-observable} if there is a state $\delta(q,a)$ that is non-observable, for some input symbol $a$. (In what follows, we assume that each deterministic finite automaton is completely specified without non-accessible states and, therefore, call them simply deterministic finite automata.) They proved that the family of all regular languages accepted by completely specified deterministic finite automata with all states being observable, denoted by $\mathcal{O}$, forms a proper subfamily of the family of regular languages which is not closed under union, intersection, complementation, concatenation, intersection with regular sets, $\eps$-free homomorphism, inverse homomorphism, mirror image, and right and left quotient. However, if all the regular languages are taken over a given alphabet $\Sigma$, where $\Sigma$ is minimal for them, then the family of all such regular languages, denoted by $\mathcal{O}(\Sigma)$, is closed under union, concatenation, and left quotient (the other properties are the same as for $\mathcal{O}$). It is also not hard to see that if the automaton has a non-observable state, then there is an equivalent automaton which has only one non-observable state (by the minimization). Thus, we have the well-known result showing that any regular language is accepted by a deterministic finite automaton with no more than one non-observable state.

  On the other hand, considering the number of semi-observable states gives rise to an infinite hierarchy of families of subregular languages, denoted by $\mathcal{T}_k$, $k\ge 0$. In addition, it is known that for $k\ge 4$, all the families $\mathcal{T}_k$ are anti-AFL's, i.e., they are not closed under union, concatenation, Kleene +, $\eps$-free homomorphism, inverse homomorphism, and intersection with regular sets. However, the properties of (some of) the other language families were left open. Note also that in comparison with $\mathcal{T}_k$, $k\ge 4$, $\mathcal{T}_0$ is closed under Kleene $+$ and, therefore, is not an anti-AFL.

  This paper answers these questions and proves that except for $\mathcal{T}_1$, all the families $\mathcal{T}_k$, $k\ge 2$, are anti-AFL's.

\section{Preliminaries and Definitions}
  In this paper, we assume that the reader is familiar with formal language theory (see \cite{salomaa}). For an alphabet (finite nonempty set) $\Sigma$, $\Sigma^*$ represents the free monoid generated by $\Sigma$, where the unit of $\Sigma^*$ is denoted by $\eps$. Set $\Sigma^+ = \Sigma^*-\{\eps\}$. Let $\REG$ denote the family of all regular languages. For other non-specified notions and notations, the reader is referred to \cite{salomaa}.

  Let $\mathcal{A}=(Q,\Sigma,\delta,q_0,F)$ be a completely specified {\em deterministic finite automaton}, i.e., $Q$ is a finite set of states, $\Sigma$ is an input alphabet, $\delta:Q\times\Sigma\to Q$ is a total (completely specified) transition function, $q_0\in Q$ is the initial state, and $F\subseteq Q$ is a (possibly empty) set of final states. Let $L(\mathcal{A})$ denote the language accepted by $\mathcal{A}$, i.e., $L(\mathcal{A})=\{w\in\Sigma^* : \delta(q_0,w)\in F\}$, where $\delta$ is, as usual, extended to be from $Q\times\Sigma^*$ to $Q$.

  In addition, in this paper we assume that all the states of $Q$ are {\em accessible}, which means that for each $q\in Q$, there exists $w\in\Sigma^*$ such that $\delta(q_0,w)=q$. In what follows, the notion of a deterministic finite automaton (DFA) stands for an automaton which is completely specified, deterministic, and without non-accessible states.

  A state $q\in Q$ is said to be {\em observable} if there exists a string $w\in\Sigma^*$ such that $\delta(q,w)\in F$. Otherwise, $q$ is {\em non-observable}. An observable state $q\in Q$ is said to be {\em semi-observable} if for some $a\in\Sigma$, the state $\delta(q,a)$ is non-observable. A DFA $\mathcal{A}$ is said to be observable if all its states are observable, and the language $L(\mathcal{A})$ is said to be observable, too. Let $\mathcal{O}$ denote the family of all observable regular languages. For a given DFA $\mathcal{A}$, denote by $so(A)$ the number of semi-observable states in $\mathcal{A}$. For each $k\ge 0$, define the language family $\mathcal{T}_k=\{L\in\REG : \textrm{ there is a DFA } \mathcal{A} \textrm{ such that } L=L(\mathcal{A}) \textrm{ and } so(A)\le k\}$.

  An alphabet $\Sigma$ is minimal for a given language $L$ if $L\subseteq\Sigma^*$ and for each $\Sigma'\subset\Sigma$ we have $L-\Sigma'^*\neq\emptyset$. Let $\mathcal{O}(\Sigma)$ and $\REG(\Sigma)$ denote the families of observable regular languages and regular languages, respectively, for which $\Sigma$ is the minimal alphabet. Finally, define the language families $\mathcal{T}_k(\Sigma)$ in an analogous way.

\section{An Overview of Known Results}
  This section presents an overview of the main results proved in \cite{mateescu}. Let $L\subseteq\Sigma^*$ be a regular language, and let $Init(L)=\{w\in\Sigma^* : wy\in L \textrm{ for some } y\in\Sigma^*\}$. Then, the following characterization of observable languages is known.
  \begin{theorem}
    A regular language $L\subseteq\Sigma^*$, where $\Sigma$ is minimal for $L$, is observable if and only if $Init(L)=\Sigma^*$.
  \end{theorem}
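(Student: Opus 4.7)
The plan is to unravel both implications directly from the definitions of observability and $Init$, using only that a DFA for $L$ is complete and has all states accessible.

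For the forward implication, I would suppose $L$ is observable, witnessed by a DFA $\mathcal{A} = (Q,\Sigma,\delta,q_0,F)$ over $\Sigma$ with every state observable. Given an arbitrary $w \in \Sigma^*$, completeness of $\delta$ produces a unique state $q = \delta(q_0,w)$. Observability of $q$ then yields a string $y$ with $\delta(q,y) \in F$, so $wy \in L$ and therefore $w \in Init(L)$. Since $Init(L) \subseteq \Sigma^*$ is automatic, this gives $Init(L) = \Sigma^*$.

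For the converse, I would assume $Init(L) = \Sigma^*$ and take the minimal complete DFA $\mathcal{A}$ accepting $L$ over $\Sigma$, in which every state is, by construction, accessible. For an arbitrary state $q$, pick $w$ with $\delta(q_0,w) = q$; since $w \in \Sigma^* = Init(L)$, there exists $y$ with $wy \in L$, and then $\delta(q,y) = \delta(q_0,wy) \in F$. Thus every state of $\mathcal{A}$ is observable, so $L$ is observable.

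There is no real obstacle, as the proof is essentially a bookkeeping translation between the two formulations. The only point worth flagging is the role of the hypothesis that $\Sigma$ is minimal for $L$: it guarantees that the DFA we work with is genuinely over $\Sigma$ rather than over some proper subalphabet padded with dead letters, so that ranging over $w \in \Sigma^*$ and examining $Init(L)\subseteq\Sigma^*$ line up cleanly in both directions (were some $a\in\Sigma$ not to occur in any word of $L$, the minimal DFA over $\Sigma$ would have a dead state reached by $a$ and simultaneously $a\notin Init(L)$, so both sides of the equivalence would trivially fail).
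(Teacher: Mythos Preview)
Your argument is correct and is the natural proof of this characterization. Note, however, that the paper does not actually prove this theorem: it is stated in the ``Overview of Known Results'' section as a result quoted from the Mateescu--P{\u a}un reference, so there is no proof in the present paper to compare against. Your write-up would serve perfectly well as the missing justification.
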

  It immediately follows from this theorem that any observable language is infinite. In addition, it is known that if $L\subseteq\Sigma^*$ is finite, then $\Sigma^* - L$ is observable. However, the other inclusion does not hold. The following properties are known.

  \begin{theorem}
    The family $\mathcal{O}$ is closed under Kleene $+$.
  \end{theorem}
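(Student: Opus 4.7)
The plan is to deduce the closure from Theorem~1 (the $Init$-characterization) rather than by manipulating DFAs directly. Let $L\in\mathcal{O}$, and let $\mathcal{A}=(Q,\Sigma,\delta,q_0,F)$ be an observable DFA with $L(\mathcal{A})=L$. The first step is to observe that $\Sigma$ must be minimal for $L$: given any $a\in\Sigma$, the state $\delta(q_0,a)$ is observable, so there is some $y$ with $\delta(q_0,ay)\in F$, which means $ay\in L$ and hence $a$ appears in a word of $L$.

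With $\Sigma$ minimal for $L$, Theorem~1 gives $Init(L)=\Sigma^*$. Now I would verify the two hypotheses of Theorem~1 for $L^+$. First, $\Sigma$ is minimal for $L^+$: since $L$ is observable it is infinite (in particular nonempty), so $L\subseteq L^+\subseteq\Sigma^*$, and every letter of $\Sigma$ that occurs in some word of $L$ also occurs in that same word, viewed as an element of $L^+$. Second, $Init(L^+)=\Sigma^*$: for any $w\in\Sigma^*$, the equality $Init(L)=\Sigma^*$ produces $y\in\Sigma^*$ with $wy\in L\subseteq L^+$, so $w\in Init(L^+)$. Since $L^+$ is regular and satisfies both conditions, Theorem~1 (in the sufficient direction) yields $L^+\in\mathcal{O}$.

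There is no real obstacle here; the only mildly delicate point is justifying that working with the minimal alphabet is without loss of generality, which is handled by the preliminary observation that the alphabet of any observable DFA is automatically minimal for the accepted language. After that, the $Init$-characterization does all the work, and no explicit automaton construction for $L^+$ is needed.
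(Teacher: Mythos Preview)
Your argument is correct. Note, however, that this theorem appears in the paper only as a quoted result from \cite{mateescu} (Section~3, ``An Overview of Known Results''), with no proof given; so there is no proof in the present paper to compare against.

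The closest thing the paper offers is its proof of Lemma~9 (closure of $\mathcal{T}_1$ and $\mathcal{T}_1(\Sigma)$ under Kleene~$+$), and the method there is genuinely different from yours: the paper builds a DFA for $L^+$ explicitly via $\eps$-transitions, $\eps$-removal, the subset construction, and minimization, and then argues state-by-state that any accessible subset-state other than $\{q\}$ contains a non--semi-observable, observable state of the original automaton and hence cannot be semi-observable. Your route through Theorem~1 is shorter and cleaner for $\mathcal{O}$: once you observe that the alphabet of an observable DFA is automatically minimal for its language, the inclusion $L\subseteq L^+$ transfers $Init(L)=\Sigma^*$ to $Init(L^+)=\Sigma^*$ immediately, and no automaton for $L^+$ ever needs to be written down. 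The trade-off is that the $Init$-characterization says nothing about how many semi-observable states a DFA for $L^+$ might have, so your method does not extend to the $\mathcal{T}_k$ question; the paper's explicit construction is what lets it count semi-observable states and handle $\mathcal{T}_1$.
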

  \begin{theorem}
    The family $\mathcal{O}$ is not closed under union, intersection, complementation, concatenation, intersection with regular sets, $\eps$-free homomorphism, inverse homomorphism, mirror image, and right and left quotient.
  \end{theorem}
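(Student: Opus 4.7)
The plan is to dispatch each of the nine non-closure claims by exhibiting a concrete counterexample: two observable regular languages (or one, where the operation is unary) whose image under the operation is not observable. The workhorses will be Theorem~1 above, which characterises observability by $Init(L)=\Sigma^{*}$ over the minimal alphabet, together with the corollary noted immediately afterwards that every observable language is infinite. Each counterexample then reduces either to exhibiting a short string that is not a prefix of any word in the result, or to showing that the result is finite or empty.

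For the binary operations that enlarge the effective alphabet, and for the mirror image, I would pair observable languages on disjoint single-letter alphabets and check that $\{a,b\}$ really is minimal for the combined language. Specifically: $a^{+}\cup b^{+}$ for union (the prefix $ab$ is missing from $Init$), $a^{+}b^{+}$ for concatenation (prefix $ba$ missing), the $\eps$-free homomorphism $h(a)=ab$ applied to $a^{+}$, producing $(ab)^{+}$ with the prefix $aa$ missing, and the reversal of $\Sigma^{*}a$, which is $a\Sigma^{*}$ over $\{a,b\}$ and omits the prefix $b$. In every case Theorem~1 certifies non-observability at once.

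For the remaining operations I would force the output to be finite. Intersection: $\Sigma^{*}a\cap\Sigma^{*}b=\emptyset$, while both factors are observable. Intersection with regular sets: $\Sigma^{+}\cap\{\eps\}=\emptyset$. Complementation: $\overline{\Sigma^{+}}=\{\eps\}$ and $\Sigma^{+}$ is observable. Inverse homomorphism: take $L=\Sigma^{*}a$ and $h\colon\{c\}^{*}\to\{a,b\}^{*}$ with $h(c)=b$, so that $h^{-1}(L)=\emptyset$. Right and left quotients: $(\Sigma^{*}a)/\{b\}=\emptyset$ because no word in $\Sigma^{*}a$ ends in $b$, and $\{b\}\backslash a^{+}=\emptyset$ when $a^{+}$ is viewed as a subset of $\{a,b\}^{*}$. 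Since every observable language is infinite, each such empty or finite output is automatically non-observable.

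The only careful bookkeeping is the minimal-alphabet hypothesis of Theorem~1, which I would confirm by inspection in each case: for example, $\{a,b\}$ is minimal for $a^{+}\cup b^{+}$ because dropping either letter leaves a word not representable over the remaining alphabet. I do not anticipate any combinatorial obstacle; the main difficulty is simply to select a uniform family of small, easily verified examples that collectively cover all nine operations, which is essentially the content of the proof given in~\cite{mateescu}.
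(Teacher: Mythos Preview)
The paper does not actually prove this theorem: it appears in Section~3 (``An Overview of Known Results'') as a statement quoted from~\cite{mateescu}, with no accompanying argument. There is therefore nothing in the present paper to compare your proposal against.

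That said, your proposal is sound. Each of the nine counterexamples works as stated once one checks (as you do) the minimal-alphabet hypothesis needed to invoke Theorem~1, or else notes that the output is finite/empty while observable languages are infinite. One small wording issue: in the left-quotient example, saying ``when $a^{+}$ is viewed as a subset of $\{a,b\}^{*}$'' is slightly off, since over $\{a,b\}$ the completely specified DFA for $a^{+}$ is \emph{not} observable. The point is rather that $a^{+}\in\mathcal{O}$ as a language (witnessed by the one-letter DFA over $\{a\}$), while $\{b\}\backslash a^{+}=\emptyset\notin\mathcal{O}$; and indeed such an alphabet-changing trick is forced here, because by Theorem~4 the fixed-alphabet family $\mathcal{O}(\Sigma)$ \emph{is} closed under left quotient.
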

  \begin{theorem}
    Let $\Sigma$ be an alphabet with at least two symbols. Then, the family $\mathcal{O}(\Sigma)$ is not closed under intersection, complementation, homomorphism, inverse homomorphism, mirror image, and right quotient. On the other hand, it is closed under union, concatenation, and left quotient.
  \end{theorem}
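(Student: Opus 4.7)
The plan is to exploit Theorem~1: within $\REG(\Sigma)$, observability is equivalent to $Init(L)=\Sigma^*$. A useful preliminary observation is that $Init(L)=\Sigma^*$ together with $L\subseteq\Sigma^*$ already forces $\Sigma$ to be minimal for $L$: otherwise $L\subseteq\Sigma'^*$ for some $\Sigma'\subset\Sigma$, and picking $a\in\Sigma\setminus\Sigma'$ we would have $a\in Init(L)$ yet no word of $\Sigma'^*$ starts with $a$. Thus membership in $\mathcal{O}(\Sigma)$ collapses to checking $L\subseteq\Sigma^*$ and $Init(L)=\Sigma^*$.

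For the three positive closures, given $L_1,L_2\in\mathcal{O}(\Sigma)$: union is immediate since $Init(L_1\cup L_2)=Init(L_1)\cup Init(L_2)=\Sigma^*$. For concatenation, any $w\in\Sigma^*$ extends via $Init(L_1)=\Sigma^*$ to some $wy\in L_1$, and appending any $z\in L_2$ (nonempty because $Init(L_2)=\Sigma^*$) gives $wyz\in L_1L_2$, so $Init(L_1L_2)=\Sigma^*$. For the left quotient $M^{-1}L$ by a nonempty regular $M\subseteq\Sigma^*$, fix $m\in M$; any $w\in\Sigma^*$ then admits $mw\in\Sigma^*$, which extends via $Init(L)=\Sigma^*$ to some $mwy\in L$, hence $wy\in M^{-1}L$ and $w\in Init(M^{-1}L)$.

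For the non-closures I would exhibit explicit counterexamples over $\Sigma=\{a,b\}$, trivially adapted to larger $\Sigma$. Setting $L_1=\Sigma^*a\Sigma^*$ and $L_2=b\Sigma^*$ (both in $\mathcal{O}(\Sigma)$) yields $L_1\cap L_2=b\Sigma^*a\Sigma^*$, whose $Init$ omits every nonempty word starting with $a$, refuting intersection. The complement of $\Sigma^*a\Sigma^*$ is $\{b\}^*$, for which $\Sigma$ is not minimal, refuting complementation. The homomorphism $h(a)=h(b)=a$ maps $\Sigma^*a$ to $a^+$, whose minimal alphabet is not $\Sigma$. The inverse homomorphism induced by $h(a)=h(b)=b$ sends $\Sigma^*a\Sigma^*$ to $\emptyset$, which does not belong to $\mathcal{O}(\Sigma)$. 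For mirror image, $\Sigma^*a\in\mathcal{O}(\Sigma)$ reverses to $a\Sigma^*$, whose $Init$ misses~$b$. For right quotient, $(\Sigma^*a)\{b\}^{-1}=\emptyset$.

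The main obstacle is the closure under left quotient, where one must verify that stripping arbitrary prefixes from $M$ does not destroy the minimality of $\Sigma$; this is exactly what the preliminary implication isolated above is designed to circumvent, by turning the minimality check into an $Init$ check. Every other step reduces to routine inspection of $Init$ or of the alphabet actually used by the constructed language.
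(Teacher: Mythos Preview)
The paper does not actually prove this theorem; it is listed in the ``Overview of Known Results'' section as a result of Mateescu and P{\u a}un~\cite{mateescu}, so there is no argument in the present paper to compare against. Your strategy---reducing membership in $\mathcal{O}(\Sigma)$ to the single condition $Init(L)=\Sigma^*$ via Theorem~1 together with your preliminary observation that $Init(L)=\Sigma^*$ already forces minimality of $\Sigma$---is correct and makes the three positive closures one-line checks. The counterexamples you give for complementation, homomorphism, inverse homomorphism, mirror image, and right quotient are all valid.

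There is, however, a genuine error in your intersection counterexample. You assert that $L_2=b\Sigma^*$ lies in $\mathcal{O}(\Sigma)$, but it does not: $Init(b\Sigma^*)=\{\eps\}\cup b\Sigma^*$, since no word beginning with $a$ has an extension in $b\Sigma^*$; in particular $a\notin Init(L_2)$. Equivalently, in the minimal complete DFA for $b\Sigma^*$ over $\{a,b\}$, reading $a$ from the initial state lands in a non-observable sink. An easy repair is to take $L_1=\Sigma^*a$ and $L_2=\Sigma^*b$; both are in $\mathcal{O}(\Sigma)$ (extend any $w$ by $a$, respectively $b$), yet $L_1\cap L_2=\emptyset\notin\mathcal{O}(\Sigma)$. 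One minor caveat on the positive side: your left-quotient argument requires $M\neq\emptyset$, which you do state explicitly; with $M=\emptyset$ the quotient is empty and closure would fail trivially, so this hypothesis is indeed necessary.
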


  \begin{theorem}$ $
    \begin{itemize}
      \item $\mathcal{T}_0=\mathcal{O}\cup\{\emptyset\}$,
      \item $\mathcal{T}_k\subset\mathcal{T}_{k+1}$, $k\ge 0$, and
      \item $\bigcup_{k\ge 0} \mathcal{T}_k = \REG$.
    \end{itemize}
  \end{theorem}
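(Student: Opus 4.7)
The plan is to prove the three items separately and in order.

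First, for $\mathcal{T}_0=\mathcal{O}\cup\{\emptyset\}$, I would establish both inclusions. One is direct: an observable DFA has no non-observable states and hence no semi-observable ones, so $\mathcal{O}\subseteq\mathcal{T}_0$, while $\emptyset$ is accepted by a single non-observable sink and thus also lies in $\mathcal{T}_0$. For the other direction, take $L=L(\mathcal{A})$ with $so(\mathcal{A})=0$; if $L\ne\emptyset$ the initial state $q_0$ is observable, and because no observable state can transition to a non-observable one, a short induction on the length of the word reaching a state shows every accessible state is observable. The standing assumption that all states are accessible then forces $\mathcal{A}$ to be fully observable, i.e.\ $L\in\mathcal{O}$.

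Next, for the strict hierarchy, I would handle $\mathcal{T}_k\subseteq\mathcal{T}_{k+1}$ trivially from the definition and then exhibit, for each $k\ge 0$, a separator in $\mathcal{T}_{k+1}\setminus\mathcal{T}_k$. My intended witness is $L_k=\{a^k\}$ over $\Sigma=\{a,b\}$: its minimal DFA is the chain $q_0\to q_1\to\cdots\to q_k$ together with a non-observable sink $q_\bot$, in which each of the $k+1$ chain states leaks to $q_\bot$ on some letter, giving exactly $k+1$ semi-observable states and hence $L_k\in\mathcal{T}_{k+1}$. The hard half is ruling out $L_k\in\mathcal{T}_k$; for this I would project any DFA for $L_k$ onto its minimal DFA via Nerode equivalence, noting that a state is observable iff its class is, and any transition whose continuation cannot be completed to a word in $L_k$ must land in a non-observable state in every preimage. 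Since the minimal DFA has $k+1$ observable classes each carrying such a leak, every DFA for $L_k$ contains at least $k+1$ semi-observable states.

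Finally, $\bigcup_{k\ge 0}\mathcal{T}_k=\REG$ is immediate: any regular $L$ is accepted by a DFA with finitely many states and hence with some finite number $k$ of semi-observable states, placing $L\in\mathcal{T}_k$, while the reverse inclusion is built into the definitions.

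The main obstacle is the lower bound in the second item, namely excluding the possibility that a cleverly non-minimal DFA might redistribute semi-observability across duplicated states to get below $k+1$. The Nerode projection handles this because non-observability is a residual-level property — a state is non-observable exactly when its residual is empty — so neither merging nor splitting equivalent states can convert a transition whose target has empty residual into one with a nonempty residual.
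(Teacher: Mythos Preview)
Your argument is correct, but there is nothing in the paper to compare it to: this theorem is listed in the ``Overview of Known Results'' section and is attributed to Mateescu and P{\u a}un without proof. The paper merely states it, together with Lemma~6 (that the minimal DFA realizes the least $k$), as background.

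That said, your Nerode-projection argument is in fact a clean self-contained proof of Lemma~6, which the paper also only states. The key observation you use---that a state is (non-)observable iff its Nerode class is, because observability is exactly nonemptiness of the residual, and hence that semi-observability is likewise determined by the class---shows that the semi-observable states of any DFA are precisely the preimages of the semi-observable states of the minimal DFA. Surjectivity of the projection (guaranteed by the standing accessibility assumption) then gives $so(\mathcal{A})\ge so(\mathcal{A}_{\min})$ for every DFA $\mathcal{A}$ accepting $L$, with equality at the minimal one. Your separating family $L_k=\{a^k\}$ over $\{a,b\}$ works exactly as you describe: the minimal DFA has $k+1$ chain states, each leaking to the sink, so $so=k+1$ on the nose.

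The only point where a reader might want one more sentence is the surjectivity of $\pi$ (so that each of the $k+1$ semi-observable classes indeed has a preimage); this follows immediately because every state of the minimal DFA is reached by some word $w$, and then $\pi(\delta_{\mathcal{A}}(q_0,w))$ hits it. Otherwise the proposal stands as written.
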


  Let $L$ be a regular language. To count the smallest $k$ such that $L\in \mathcal{T}_k$, we can use the following lemma.
  \begin{lemma}
    Given a language $L\in \mathcal{T}_{k} - \mathcal{T}_{k-1}$, the value of $k$ can be obtained algorithmically, by constructing a minimal DFA for $L$.
  \end{lemma}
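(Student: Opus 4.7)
The plan is to show that the minimal DFA for $L$ is already optimal with respect to the count of semi-observable states, so that $k$ equals $so(\mathcal{A}_{\min})$; this quantity is then computed by a standard minimization followed by a reachability analysis.

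First I would recall the construction. Given any DFA $\mathcal{B}=(Q,\Sigma,\delta,q_0,F)$ accepting $L$, the minimal DFA $\mathcal{A}_{\min}$ is the quotient by the Myhill-Nerode equivalence $q\sim q'$ iff $\{w : \delta(q,w)\in F\}=\{w : \delta(q',w)\in F\}$. Since $\sim$-equivalent states accept the same right language, they are simultaneously observable or non-observable; hence observability depends only on the equivalence class $[q]$, and $[q]$ is observable in $\mathcal{A}_{\min}$ iff $q$ is observable in $\mathcal{B}$.

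Next I would verify the key invariance: semi-observability is also $\sim$-invariant. Since the quotient transition is well-defined by $[q]\stackrel{a}{\to}[\delta(q,a)]$, the class $[q]$ has a transition on $a$ to a non-observable state iff $\delta(q,a)$ is non-observable in $\mathcal{B}$. Combined with the observation of the previous paragraph, $[q]$ is semi-observable in $\mathcal{A}_{\min}$ iff $q$ is semi-observable in $\mathcal{B}$. Consequently, the semi-observable states of $\mathcal{A}_{\min}$ are exactly the equivalence classes containing a semi-observable state of $\mathcal{B}$, which immediately yields $so(\mathcal{A}_{\min})\le so(\mathcal{B})$.

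Since any DFA accepting $L$ can be minimized to $\mathcal{A}_{\min}$ (unique up to isomorphism), the inequality above shows that $\mathcal{A}_{\min}$ attains the minimum of $so(\cdot)$ over all DFAs for $L$. Given $L\in\mathcal{T}_k-\mathcal{T}_{k-1}$, this minimum is $k$, so $k=so(\mathcal{A}_{\min})$. The algorithm is then straightforward: apply the standard minimization procedure; mark observable states by a backward reachability search from $F$; count observable states that have at least one outgoing transition to a non-observable state. The only delicate step is the $\sim$-invariance of semi-observability; while ultimately routine given the well-definedness of the quotient transitions, it is the crucial ingredient, since without it one would have no reason to expect the minimal DFA to be optimal in this non-standard measure rather than merely in total state count.
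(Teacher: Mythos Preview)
The paper does not actually prove this lemma: it appears in Section~3, ``An Overview of Known Results,'' where it is quoted from~\cite{mateescu} without argument. There is therefore no proof in the present paper to compare your proposal against.

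That said, your argument is correct and is the natural one. The key point---that Nerode-equivalent states are simultaneously observable or not, and hence (via the well-definedness of the quotient transition) simultaneously semi-observable or not---is exactly what is needed, and your conclusion $so(\mathcal{A}_{\min})\le so(\mathcal{B})$ for every DFA $\mathcal{B}$ accepting $L$ follows cleanly. The algorithmic description at the end is also fine. One cosmetic remark: you implicitly use that the quotient of a DFA with only accessible states again has only accessible states (so that it really is the minimal DFA in the paper's sense); this is immediate but worth a word since the paper's conventions insist on accessibility.
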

  The following properties are also known:
  \begin{itemize}
    \item $\mathcal{T}_k$ is not closed under union, for $k\ge 2$.
    \item $\mathcal{T}_k$ is not closed under concatenation, for $k\ge 2$.
    \item $\mathcal{T}_k$ is not closed under $\eps$-free homomorphism, for $k\ge 2$.
    \item $\mathcal{T}_k$ is not closed under intersection with regular sets, for $k\ge 0$.
    \item $\mathcal{T}_k$ is not closed under inverse homomorphism, for $k\ge 2$.
    \item $\mathcal{T}_k$ is not closed under Kleene $+$, for $k\ge 4$.
  \end{itemize}
  Except for Kleene $+$, where it is not solved for the families $\mathcal{T}_k$, $k=1,2,3$, it remains to solve these questions for $\mathcal{T}_1$. This is done in the next section.

\section{Results}
  This section answers the questions formulated as open in \cite{mateescu}. It shows that except for the family $\mathcal{T}_1$, all the families $\mathcal{T}_k$ are anti-AFL's, for all $k\ge 2$.
  \begin{lemma}
    Families $\mathcal{T}_1$ and $\mathcal{T}_1(\Sigma)$ are not closed under union, concatenation, $\eps$-free homomorphism, and inverse homomorphism.
  \end{lemma}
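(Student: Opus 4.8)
The plan is to prove non-closure under each of the four operations by exhibiting concrete witness languages in $\mathcal{T}_1$ (or $\mathcal{T}_1(\Sigma)$) whose image under the operation requires at least two semi-observable states, hence lies in $\mathcal{T}_2 - \mathcal{T}_1$. The key tool is the preceding material: to place a language $L$ in $\mathcal{T}_1$ it suffices to build a DFA accepting $L$ with at most one semi-observable state, and to show that an image is \emph{not} in $\mathcal{T}_1$ it suffices to argue about its minimal DFA, since by the stated Lemma the minimal number of semi-observable states is a property of the minimal automaton. A recurring building block will be the observation (from Theorem 1) that an observable language over its minimal alphabet has $Init(L)=\Sigma^*$, so a non-observable state appears precisely when some prefix cannot be completed to a word of the language; the number of semi-observable states then counts the distinct ``last chances'' before falling into the unique dead (non-observable) state.

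First I would handle union. I would pick two simple languages $L_1,L_2\in\mathcal{T}_1$ over disjoint or carefully overlapping alphabets whose minimal DFAs each have exactly one semi-observable state, and arrange $L_1\cup L_2$ so that its minimal DFA is forced to have two independent semi-observable states; a natural choice is something like $L_1$ and $L_2$ living on different letters so that no merging of the two ``last chance'' states is possible in the minimal automaton. Concatenation would be treated analogously: take $L_1,L_2\in\mathcal{T}_1$ and show $L_1L_2$ needs two semi-observable states, again by describing its minimal DFA directly. For $\eps$-free homomorphism and inverse homomorphism I would start from a single witness $L\in\mathcal{T}_1$ and choose the homomorphism $h$ so that $h(L)$ (respectively $h^{-1}(L)$) duplicates or separates the critical transition into the dead state, creating a second semi-observable state. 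In each case the verification splits into two routine parts: constructing a one-semi-observable-state DFA for the source language(s), and identifying the minimal DFA of the target and counting its semi-observable states to be exactly two.

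Because the statement covers both $\mathcal{T}_1$ and $\mathcal{T}_1(\Sigma)$, I would try to select witnesses for which $\Sigma$ is already minimal, so that a single family of examples settles both claims simultaneously; if the alphabet-minimality forces a different construction (as it can for union, where the disjoint-alphabet trick is unavailable under a fixed minimal $\Sigma$), I would supply a second witness over a fixed $\Sigma$ and re-run the same minimal-DFA analysis. The main obstacle I anticipate is \emph{lower-bounding} the number of semi-observable states of the image: exhibiting a two-semi-observable DFA is easy, but proving that no equivalent DFA does better requires reasoning about the minimal DFA and showing that its two candidate states genuinely differ (cannot be merged by the Myhut--Nerode equivalence) and that each is genuinely semi-observable, i.e.\ observable yet possessing a transition into the non-observable sink. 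I would isolate this as a short combinatorial claim about the Nerode classes of the witness, since the rest of the argument is mechanical once the right examples are in hand.
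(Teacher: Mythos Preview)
Your plan is correct and mirrors the paper's approach exactly: for each of the four operations the paper exhibits concrete witness languages in $\mathcal{T}_1$ over a fixed two-letter alphabet (so that the same examples settle $\mathcal{T}_1(\Sigma)$ as well), then displays the minimal DFA of the result and counts two semi-observable states. The only thing missing from your proposal is the actual choice of witnesses, which the paper supplies explicitly (e.g.\ $L_1=(a+b)a^*$ and $L_2=(a+b)b^*$ for union); once those are fixed, the verification is precisely the routine minimal-DFA analysis you describe.
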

  \begin{proof} $ $

    \noindent
    {\it Union:} Let $L_1,L_2\subseteq\{a,b\}^*$, $L_1=(a+b)a^*$ and $L_2=(a+b)b^*$. Then, $L_1,L_2\in\mathcal{T}_1$ because they are accepted by DFAs $\mathcal{M}_1=(\{s,r,f\},\{a,b\},\delta_1,s,\{f\})$ and $\mathcal{M}_2=(\{s,r,f\},$ $\{a,b\},\delta_2,s,\{f\})$, respectively, where $\delta_1(s,a)=\delta_1(s,b)=f$, $\delta_1(f,a)=f$, $\delta_1(f,b) = \delta_1(r,a) = \delta_1(r,b) = r$, and $\delta_2(s,a)=\delta_2(s,b)=f$, $\delta_2(f,b)=f$, $\delta_2(f,a)=\delta_2(r,a)=\delta_2(r,b)=r$. However, $L_1\cup L_2\notin\mathcal{T}_1$ because the minimal DFA accepting it is $\mathcal{M}=(\{1,2,3,4,5\}, \{a,b\}, \delta, 1, \{2,3,4\})$, where $\delta(1,a)=\delta(1,b)=2$, $\delta(2,a)=3$, $\delta(2,b)=4$, $\delta(3,a)=3$, $\delta(3,b)=5$, $\delta(4,a)=5$, $\delta(4,b)=4$, and $\delta(5,a)=\delta(5,b)=5$. Clearly, states $3$ and $4$ are semi-observable, which implies that $L_1\cup L_2\in \mathcal{T}_2-\mathcal{T}_1$.

    \bigskip\noindent
    {\it Concatenation:} Let $L_1$ and $L_2$ be two languages over $\{a,b\}$ defined by the following two deterministic finite automata $\mathcal{A}_1$ and $\mathcal{A}_2$, respectively (see Fig. \ref{concat1}). As it is easy to complete the automaton, the non-observable states are omitted from now on.
    \begin{figure}[ht]
      \begin{center}
        \includegraphics[scale=.5]{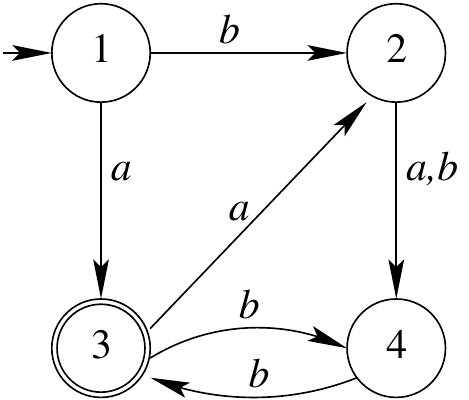} \qquad
        \includegraphics[scale=.5]{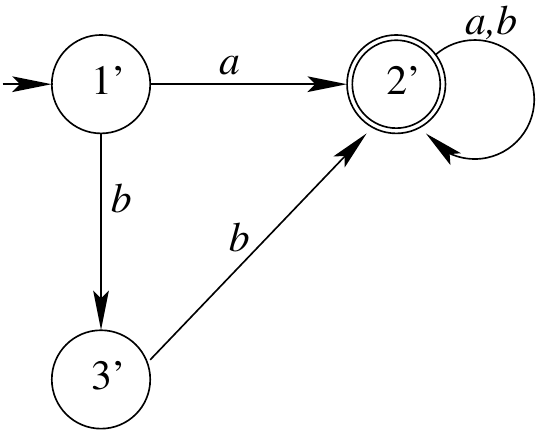}
        \caption{DFAs $A_1$ and $A_2$.}
        \label{concat1}
      \end{center}
    \end{figure}
    Then, $L_1,L_2\in\mathcal{T}_1$, with $\{a,b\}$ being their minimal alphabet, because only states $4$ and $3'$ are semi-observable. As the minimal DFA accepting the concatenation of these two languages $L_1$ and $L_2$ is as in Fig. \ref{concat2},
    \begin{figure}[ht]
      \begin{center}
        \includegraphics[scale=.5]{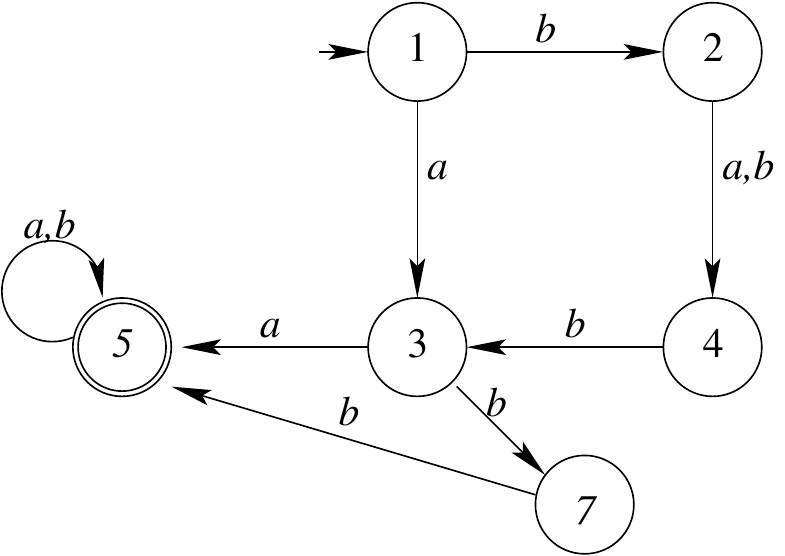}
        \caption{The minimal DFA for $L_1\cdot L_2$ with two semi-observable states, $4$ and $7$.}
        \label{concat2}
      \end{center}
    \end{figure}
     which has two semi-observable states, namely 4 and 7, the language $L_1\cdot L_2 \notin \mathcal{T}_1$.

     Note that considering two languages over different alphabets, this result can be proved more easily. Specifically, let $L_1=a^+$ and $L_2=b^+$. Then, $L_1,L_2\in\mathcal{T}_0$, but $L_1\cdot L_2\notin\mathcal{T}_1$.

    \bigskip\noindent
    {\it $\eps$-free homomorphism:} Let $L\subseteq\{a,b\}^*$ be a language defined by the following deterministic finite automaton $\mathcal{A}$ (see Fig. \ref{homo}),
    \begin{figure}
      \begin{center}
        \includegraphics[scale=.5]{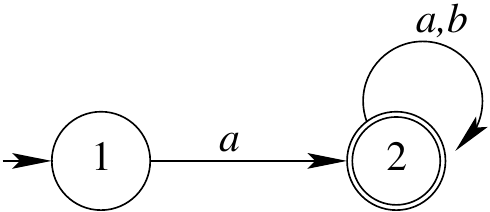} \qquad
        \includegraphics[scale=.5]{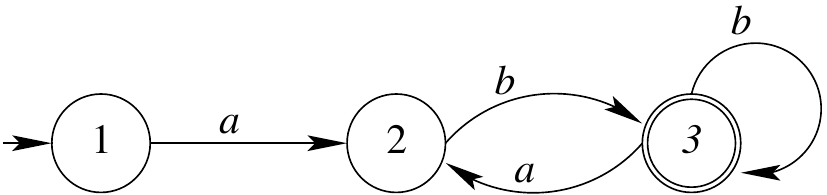}
        \caption{DFAs for $L$ and $h(L)$, respectively.}
        \label{homo}
      \end{center}
    \end{figure}
    and let $h:\{a,b\}^*\to\{a,b\}^*$ be a homomorphism defined as $h(a)=ab$, $h(b)=b$. Then, $L\in\mathcal{T}_1$, but $h(L)\notin\mathcal{T}_1$ because the minimal DFA accepting $h(L)$ has two semi-observable states (see Fig. \ref{homo}).

    \bigskip\noindent
    {\it Inverse homomorphism:} Let $L\subseteq\{a,b\}^*$ be a language defined by the following deterministic finite automaton $\mathcal{A}$ (see Fig. \ref{inv}), and let $h:\{a,b\}^*\to\{a,b\}^*$ be a homomorphism defined as $h(a)=aba$ and $h(b)=bab$. Then, $L\in\mathcal{T}_1$, but $h^{-1}(L)\notin\mathcal{T}_1$. See Fig. \ref{inv} for minimal DFAs accepting $L$ and $h^{-1}(L)$.
    \begin{figure}[ht]
      \begin{center}
        \includegraphics[scale=.5]{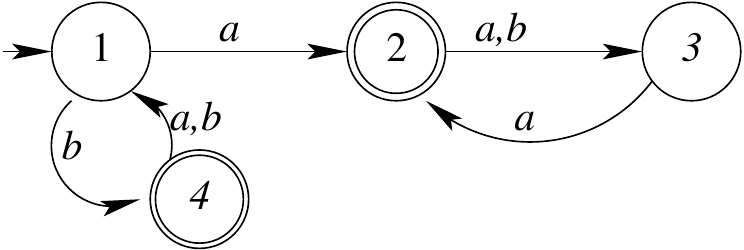} \qquad
        \includegraphics[scale=.5]{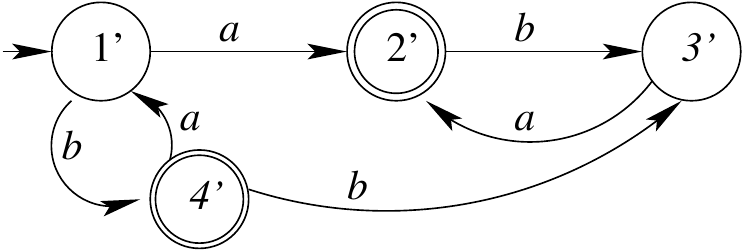}
        \caption{DFAs for $L$ and $h^{-1}(L)$, respectively.}
        \label{inv}
      \end{center}
    \end{figure}
  \end{proof}

  \begin{lemma}
    Let $\Sigma$ be an alphabet. The families $\mathcal{T}_2$, $\mathcal{T}_3$, $\mathcal{T}_2(\Sigma)$, and $\mathcal{T}_3(\Sigma)$ are not closed under Kleene $+$.
  \end{lemma}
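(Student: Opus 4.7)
The plan is to handle $k=2$ and $k=3$ separately, and in each case exhibit a concrete witness language $L$ over $\Sigma=\{a,b\}$ whose minimal DFA has exactly $k$ semi-observable states, and then show that the minimal DFA for $L^+$ has at least $k+1$ such states. By the preceding algorithmic lemma, the smallest index of a family containing a given regular language can be read off directly from its minimal DFA, so this is exactly what needs to be checked. Choosing witnesses in which both $a$ and $b$ actually occur automatically handles the alphabet-minimal variants $\mathcal{T}_k(\Sigma)$ simultaneously.

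For each $k$, the steps would be: (i) draw a candidate DFA $\mathcal{A}$ with the desired semi-observable count; (ii) build an NFA for $L^+$ by the standard construction, adding $\varepsilon$-transitions from every accepting state back to the initial state; (iii) determinize by subset construction; (iv) minimize using Myhill--Nerode or Hopcroft; (v) inspect the resulting DFA and count semi-observable states directly. Natural candidate shapes for $L$ include languages of the form $u \cdot v^* \cdot w$ or unions such as $u_1 \cup u_2 v^*$, chosen so that the ``restart'' transitions introduced by the Kleene-plus construction produce, after minimization, extra states with dead input on some letter, rather than being absorbed into the states that were already semi-observable in $\mathcal{A}$. The $k=3$ witness can plausibly be obtained by prepending or inserting an additional ``pre-factor'' segment into the $k=2$ witness so that both counts rise by one.

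The principal obstacle is combinatorial design. The witness must be small enough that its own semi-observable count is \emph{exactly} $k$, yet structured richly enough that $L^+$ genuinely acquires a fresh semi-observable state in its minimal DFA --- rather than having a potential new dead-input state collapse under minimization into one that was already semi-observable. A careful minimality verification of the DFA for $L^+$ is essential: the semi-observable count is a language-level invariant only through the minimal DFA, and a seemingly innocuous Myhill--Nerode merge could otherwise mask the expected increase. Once the two witnesses are constructed and their $L^+$ automata correctly minimized, the lemma is immediate, and combined with the preceding lemma it yields the paper's headline claim that every $\mathcal{T}_k$ with $k\ge 2$ is an anti-AFL.
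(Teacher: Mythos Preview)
Your plan is correct and matches the paper's approach exactly: the paper exhibits a concrete language $L$ over $\{a,b\}$ (given by an explicit DFA) with $L\in\mathcal{T}_2$ and $L^+\notin\mathcal{T}_2$, verified by displaying the minimal DFA for $L^+$, and then states that the $\mathcal{T}_3$ case is analogous. Your emphasis on carefully verifying minimality of the $L^+$ automaton is well placed, and your observation that using both letters of $\{a,b\}$ simultaneously handles the $\mathcal{T}_k(\Sigma)$ variants is exactly how the paper (implicitly) treats them.
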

  \begin{proof}
    Let $L\subseteq\{a,b\}^*$ be a language defined by the following deterministic finite automaton $\mathcal{A}$ (see Fig. \ref{kleene}).
    \begin{figure}[ht]
      \begin{center}
        \includegraphics[scale=.5]{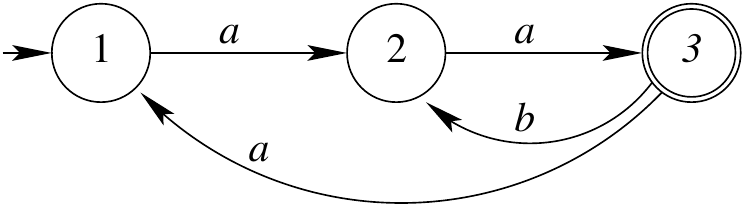} \qquad
        \includegraphics[scale=.5]{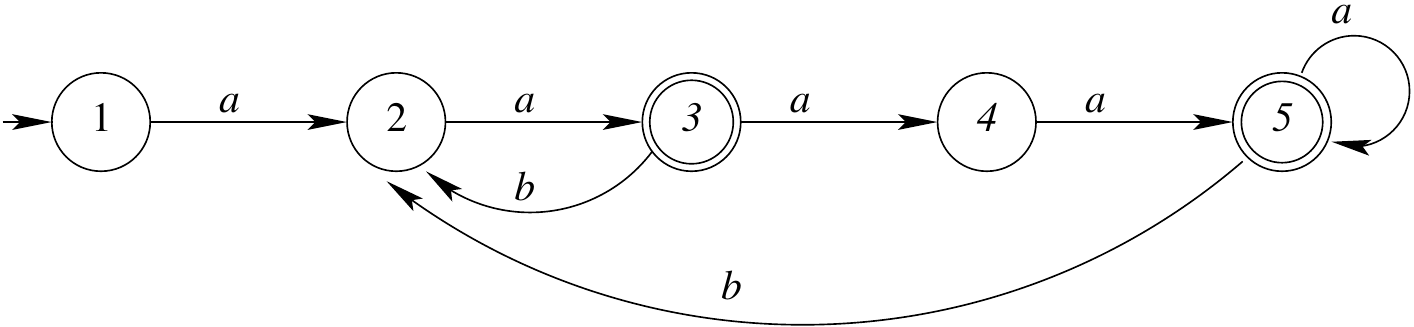}
        \caption{DFAs for $L$ and $L^+$, respectively.}
        \label{kleene}
      \end{center}
    \end{figure}
    Then, $L\in\mathcal{T}_2$, but $L^+\notin\mathcal{T}_2$. The proof for $\mathcal{T}_3$ and $\mathcal{T}_3(\Sigma)$ is analogous.
  \end{proof}

  \begin{lemma}
    Families $\mathcal{T}_1$ and $\mathcal{T}_1(\Sigma)$ are closed under Kleene $+$.
  \end{lemma}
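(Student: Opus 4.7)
The plan is to reduce to the minimal DFA for $L$, apply the standard $\eps$-NFA construction for Kleene $+$, determinize by subset construction, and verify that at most one semi-observable state survives in the minimal DFA for $L^+$. The trivial case is dispatched first: if $L\in\mathcal{T}_0$, then either $L=\emptyset$, giving $L^+=\emptyset\in\mathcal{T}_0\subseteq\mathcal{T}_1$, or $L\in\mathcal{O}$, and Theorem 2 gives $L^+\in\mathcal{O}\subseteq\mathcal{T}_1$. So assume $L\in\mathcal{T}_1\setminus\mathcal{T}_0$. By Lemma 6, the minimal DFA $\mathcal{A}=(Q,\Sigma,\delta,q_0,F)$ for $L$ has exactly one semi-observable state $p$ and, by minimality, exactly one non-observable state, a trap $d$; moreover $p$ is the unique observable state of $\mathcal{A}$ with a transition to $d$, because every other observable state is not semi-observable and so transitions only to observable states.

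I then form $\mathcal{A}'$ as the subset determinization of the $\eps$-NFA obtained from $\mathcal{A}$ by adding $\eps$-transitions from every $f\in F$ to $q_0$; concretely, $\delta'(S,a)=\delta(S,a)\cup\{q_0\}$ when $\delta(S,a)\cap F\neq\emptyset$ and $\delta'(S,a)=\delta(S,a)$ otherwise, with $S$ final iff $S\cap F\neq\emptyset$. The analysis of semi-observability in $\mathcal{A}'$ rests on two observations. First, every reachable subset is nonempty (because $\mathcal{A}$ is complete), and a reachable subset $S$ is non-observable in $\mathcal{A}'$ exactly when $S=\{d\}$, because any observable member of $S$ already certifies an accepting continuation in $\mathcal{A}'$. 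Second, $\delta'(S,a)=\{d\}$ forces $\delta(q,a)=d$ for every $q\in S$, and since only $p$ and $d$ have any transition to $d$ in $\mathcal{A}$, this forces $S\subseteq\{p,d\}$. Hence the only reachable observable subsets of $\mathcal{A}'$ that can be semi-observable are $\{p\}$ and $\{p,d\}$.

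To finish, I show by induction on $w$ that $\delta'(\{p,d\},w)=\delta'(\{p\},w)\cup\{d\}$; since $d\notin F$, this proves that $\{p\}$ and $\{p,d\}$ accept the same language from $\mathcal{A}'$, so they collapse to a single state in the minimal DFA for $L^+$. Consequently, the minimal DFA for $L^+$ has at most one semi-observable state, and $L^+\in\mathcal{T}_1$. For the family $\mathcal{T}_1(\Sigma)$, the set of symbols occurring in $L^+$ coincides with that of $L$, so $\Sigma$ remains minimal for $L^+$ whenever it is minimal for $L$, and the same construction yields $L^+\in\mathcal{T}_1(\Sigma)$.

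The main obstacle I anticipate is the second step of the analysis together with the closing induction: a priori the subset construction can produce many reachable subsets, in particular ones arising after several restarts through the $\eps$-transitions, which mix the initial-state part of $\mathcal{A}$ with arbitrary tails. I must verify both that no such subset outside $\{\,\{p\},\{p,d\}\,\}$ causes a transition to $\{d\}$, and that the inductive identity above survives every restart through $q_0$.
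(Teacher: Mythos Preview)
Your proof is correct and follows essentially the same route as the paper: pass to the minimal DFA for $L$, build the subset automaton for $L^+$ via $\eps$-transitions from final states back to $q_0$, and argue that the only subset-states that can be semi-observable are those contained in $\{p,d\}$. Your argument is in fact tighter than the paper's on one point: the paper asserts that every reachable $X\neq\{q\}$ contains an element that is observable and not semi-observable, which fails for $X=\{q,k\}$ (your $\{p,d\}$); you close this gap by proving the inductive identity $\delta'(\{p,d\},w)=\delta'(\{p\},w)\cup\{d\}$, so that $\{p\}$ and $\{p,d\}$ merge under minimization and at most one semi-observable state remains.
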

  \begin{proof}
    Let $L\in\mathcal{T}_1(\Sigma)$, and let $\mathcal{M}=(Q,\Sigma,\delta,q_0,F)$ be a minimal deterministic finite automaton such that $L(\mathcal{M})=L$. Moreover, let $q\in Q$ be the only semi-observable state of $\mathcal{M}$. Apply the following algorithm constructing a deterministic finite automaton $\mathcal{N}$ such that $L(\mathcal{N})=L^+$:
    \begin{enumerate}
      \item Add a $\eps$-transition from each final state to the initial state.
      \item Use the common algorithm for removing $\eps$-transitions.
      \item Use the subset construction to construct a deterministic automaton $\mathcal{N'}$.
      \item Minimize the automaton $\mathcal{N'}$ and remove all non-accessible states.
    \end{enumerate}
    Clearly, $\mathcal{N'}=(2^Q,\Sigma,\delta',\{q_0\},\{X\in 2^Q : X\cap F\neq\emptyset\})$. Let $\mathcal{N}$ denote the final automaton, i.e., $\mathcal{N'}$ where all non-accessible states (including the adjacent transitions) are removed.

    Let $a\in\Sigma$ be such that $k=\delta(q,a)$ is non-observable in $\mathcal{M}$. Then, if $\{q\}$ is a state of $\mathcal{N}$, then also $\{k\}=\delta'(\{q\},a)$ is a state of $\mathcal{N}$. In addition, because $\mathcal{M}$ is minimal, $\delta(k,a)=k$ for all $a\in\Sigma$, and we have that $\{k\}$ is a non-observable state. Thus, $\{q\}$ is a semi-observable state of $\mathcal{N}$.

    Assume that $\emptyset\neq X\subseteq Q$ is a state of $\mathcal{N}$ different from $\{q\}$ (clearly, $\emptyset$ is non-accessible). Let $p\in X$ such that $p\neq q$. As $p$ is observable which is not semi-observable in $\mathcal{M}$, there exists $w_a\in\Sigma^*$, for each $a\in\Sigma$, such that $\delta(p,aw_a)\in F$. Thus, $\delta'(X,aw_a)\cap F\neq\emptyset$ and, therefore, $X$ is an observable state which is not semi-observable.
  \end{proof}

  The arguments of the previous proof also explain why the families $\mathcal{T}_k$, $k\ge 2$, are not closed under Kleene $+$. Let $p$ and $q$ be semi-observable states of a deterministic finite automaton accepting $L$. Then, the deterministic finite automaton accepting $L^+$ constructed as in the previous proof can have more semi-observable states than the original one, namely $\{p\}$, $\{q\}$, and $\{p,q\}$.

\subsection*{Acknowledgements}
  This work was supported by the Czech Ministry of Education under the Research Plan No. MSM~0021630528.

\end{document}